\newtheorem{theorem}{Theorem}
\begin{document}

\title{Geometric quantum thermodynamics: A fiber bundle approach}

\author{Tiago Pernambuco}
\email{tiago4iece@gmail.com}
\affiliation{Theoretical and Experimental Physics Department, Federal University of Rio Grande do Norte, 59078-970, Natal, Brazil}

\author{Lucas C. C\'eleri}
\email{lucas@qpequi.com}
\affiliation{QPequi Group, Institute of Physics, Federal University of Goi\'as, Goi\^ania, Goi\'as, 74.690-900, Brazil}

\begin{abstract}
Classical thermodynamics is a theory based on coarse-graining, meaning that the thermodynamic variables arise from discarding information related to the microscopic features of the system at hand. In quantum mechanics, however, where one has a high degree of control over microscopic systems, information theory plays an important role in describing the thermal properties of quantum systems. Recently, a new approach has been proposed in the form of quantum thermodynamic gauge theory, where the notion of redundant information arises from a group of physically motivated gauge transformations called the thermodynamic group. In this work, we explore the geometrical structure of quantum thermodynamics. In particular, we do so by explicitly constructing the relevant principal fiber bundle. We then show that there are two distinct (albeit related) geometric structures associated with the gauge theory of quantum thermodynamics. In this way, we express thermodynamics in the same mathematical (geometric) language as the fundamental theories of physics. Finally, we discuss how the geometric and topological properties of these structures may help explain fundamental properties of thermodynamics.
\end{abstract}

\maketitle

\section{Introduction}

The universal character of thermodynamics stems from the type of physical questions it is designed to address. In the thermodynamic limit, experimentally accessible observables probe only coarse-grained averages of the underlying microscopic dynamics, since atomic-scale details lie beyond direct measurement. This coarse-graining gives rise to thermodynamic state variables and results in a robust theoretical framework that constrains macroscopic processes largely independently of microscopic details~\cite{Callen1991}. An essential consequence of this limit is the effective suppression of fluctuations. For small-scale or quantum systems, by contrast, fluctuations play a central role, and under far-from-equilibrium driving, the standard thermodynamic description breaks down. In such regimes, alternative approaches must be employed, such as stochastic thermodynamics~\cite{Strasberg2022} and information theory~\cite{Goold_2016}.

Despite significant advances since Alicki’s pioneering work~\cite{Alicki1979}, quantum thermodynamics remains incomplete~\cite{Alicki2018}. Foundational questions persist with respect to the definitions of work, heat, and entropy in the quantum regime; see~\cite{Deffner2019,Strasberg2022} for recent discussions. 

In contrast to thermodynamics, which emerges from coarse-grained descriptions of high-dimensional systems, modern fundamental physics---the standard model and general relativity---is formulated in terms of gauge theories~\cite{Baez1994,konopleva1981gauge,faddeev1991gauge}. A gauge theory is defined by a Lagrangian that is invariant under local transformations acting on internal degrees of freedom; the requirement of local invariance demands the introduction of gauge fields that mediate interactions.

Geometrically, gauge theories are expressed in the language of fiber bundles: gauge fields correspond to connections on principal bundles, and their field strengths correspond to the associated curvatures. The base manifold represents spacetime, and the fibers encode the internal symmetry group~\cite{Baez1994,konopleva1981gauge,faddeev1991gauge,nash2013topology,bleecker2013gauge,konopleva1981gauge}. Although gauge theories originated in particle physics, they appear in various contexts, including low Reynolds number hydrodynamics~\cite{PhysRevLett.58.2051,Shapere_Wilczek_1989}, condensed matter physics~\cite{Fradkin2013}, quantum information~\cite{GNNGauge2021}, and finance~\cite{Vazquez2012}, to mention just a few.

Classical thermodynamics has also been explored from this gauge-theoretic perspective. Treating the thermodynamic force as a gauge field, Ref.~\cite{Katagiri2018} analyzed the Onsager relations within nonequilibrium thermodynamics, while Ref.~\cite{Borlenghi2016} revealed a connection between $U(1)$ lattice gauge theories and nonequilibrium processes. Moreover, by examining the symmetry under local rescaling of probabilities in Markovian dynamics on finite graphs, Ref.~\cite{polettini2012nonequilibrium} proposed interpreting thermodynamic forces as gauge potentials.

Based on information theory, Refs.~\cite{ThermoGauge1,ThermoGauge2} cast quantum thermodynamics as a gauge theory of the thermodynamic group. Classical thermodynamics relies on coarse-graining, whereas quantum mechanics assumes full control over microscopic degrees of freedom, yielding an excess of information from a thermodynamic standpoint. The thermodynamic gauge group removes this redundant information, playing a role analogous to the elimination of microscopically irrelevant details in classical thermodynamics. Within this formalism, thermodynamic quantities arise uniquely from gauge invariance, which is one of the most fundamental principles of modern physics.

The gauge structure introduced in~\cite{ThermoGauge1,ThermoGauge2} is particularly rich, featuring a time-dependent gauge group whose geometric properties remain unexplored. The goal of this work is to clarify the underlying geometry of this theory and to construct the associated fiber-bundle structures in a mathematically rigorous way. This puts thermodynamics on the same mathematical structure as fundamental physical theories. This may shed new light on the foundations of classical and quantum thermodynamics. 

The paper is structured as follows. In Sec.~\ref{ThermoGaugeTheory}, we review the gauge theory of the thermodynamics group introduced in Refs.~\cite{ThermoGauge1,ThermoGauge2}. Section~\ref{GaugeGeometry} presents how gauge theories can be formulated in terms of the geometry of principal fiber bundles and introduces some common examples from field theory to pave the way for the construction of our theory, which is done in Sec.~\ref{Geometrization}. Finally, in Sec.~\ref{Discussion}, we discuss how our geometric theory may help us understand thermodynamics and provide new mathematical insights into the theory using geometric and topological techniques, as well as proposing future directions for investigation. Additionally, to make our discussion more self-contained, in Appendix~\ref{MathPrelim}, we revise the mathematics of fiber bundles, principal bundles, and Lie theory, which are used extensively throughout this paper. 

\section{Quantum thermodynamics as a gauge theory} \label{ThermoGaugeTheory}

Classical thermodynamics is a theory concerned with the macroscopic bulk properties of large systems, where fluctuations vanish. The theory focuses on the average behavior of the microscopic degrees of freedom (justified by the central limit theorem), while ignoring the microscopic configurations. This means that, in classical thermodynamics, we have limited access to information, which is called coarse graining~\cite{Callen1991}.

In trying to adapt useful concepts of thermodynamics to the quantum realm, quantum thermodynamics has emerged as a powerful theoretical framework~\cite{Deffner2019,Strasberg2022}. In great contrast to its classical counterpart, however, quantum thermodynamics is usually based on a significant amount of information about the system being studied. For instance, thermodynamic protocols are frequently considered in which knowledge of the density operator is mandatory. This is why quantum thermodynamic quantities are often defined in terms of informational measures~\cite{Deffner2019,Goold_2016}.

In an attempt to formulate quantum thermodynamics under the same paradigm that underlying classical thermodynamics, which encompasses the coarse-graining nature of classical macroscopic measurements, a new framework has been proposed~\cite{ThermoGauge1,ThermoGauge2}. The new theory proposes that when we limit our information about the system to that attainable by measurements of specific observables (thus not performing full state tomography), the full quantum state contains too much information, most of which is inaccessible to the observer. For example, if one can only perform energy measurements on the system, it is impossible to distinguish between any two states with the same energy or to obtain information regarding quantum coherences on this basis~\cite{ThermoGauge1,ThermoGauge2}.

\begin{figure}[t]
    \centering
    \includegraphics[width=1.0\linewidth]{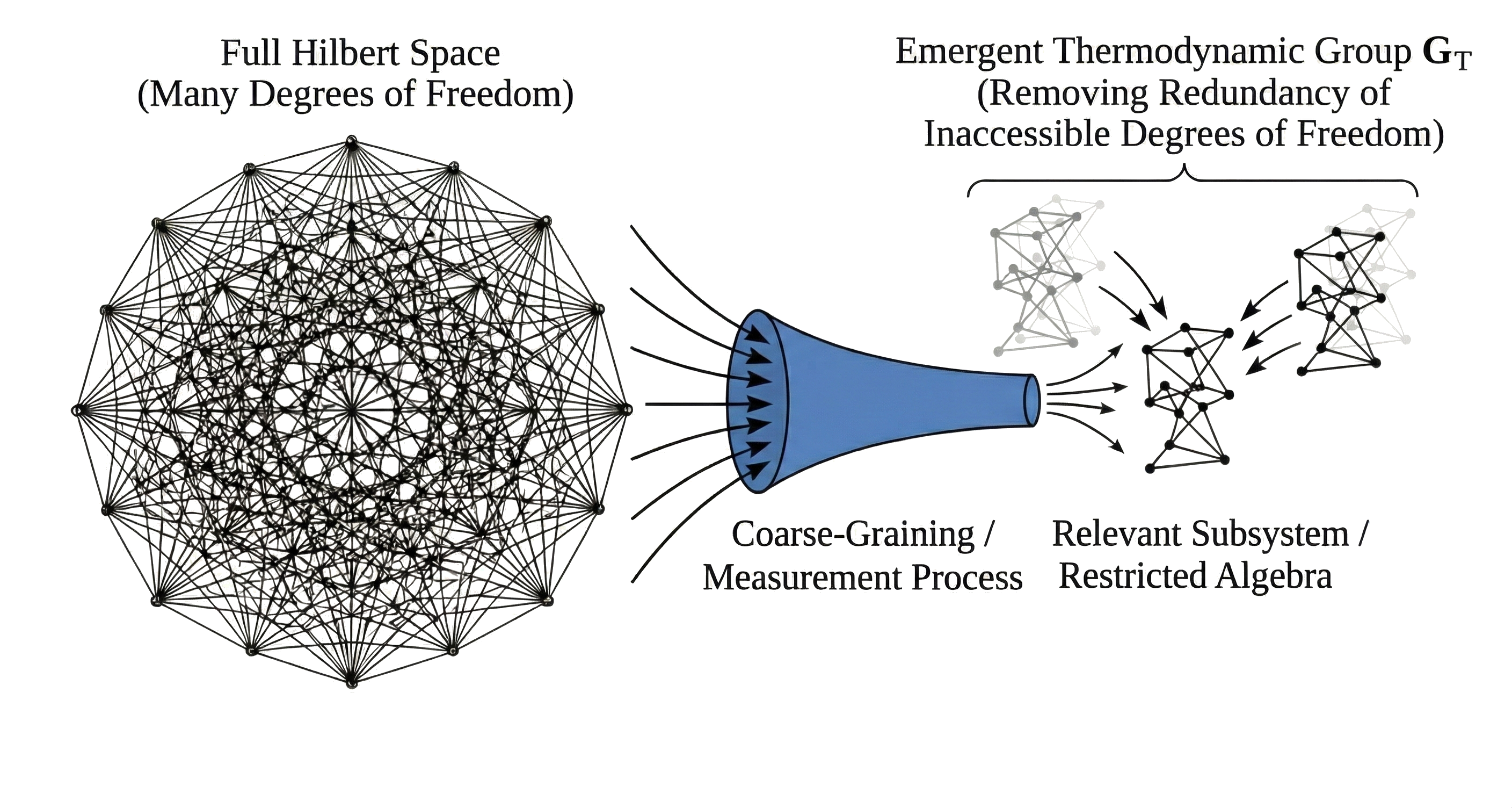}
    \caption{\justifying The coarse-graining process defined by restricting the measurements one may perform defines the emergent gauge group of quantum thermodynamics through the removal of redundancies in description due to inaccessible degrees of freedom.}
    \label{Coarse}
\end{figure}

To eliminate the redundancy created by this excess information, the theory introduces the concept of an emergent gauge symmetry. See Fig.~\ref{Coarse}. This symmetry is called emergent because the excess information contained in a quantum system is redundant only from the perspective of thermodynamics and not from information theory~\cite{ThermoGauge1}. This is analogous to the classical case, where the coarse graining is implemented by ignoring the microscopic degrees of freedom that are not accessible to classical measurements. Building on the notion of redundant information introduced in Refs.~\cite{ThermoGauge1,ThermoGauge2}, density matrices, which act as carriers of information, are proposed to be interpreted as analogues of gauge potentials in gauge theories.

Although we can consider any set of observables, let us consider the case where we can only perform energy measurements on a quantum system described by a time-dependent density matrix $\rho_t$. This is the typical situation on many experimental platforms when the dimension of the system increases~\cite{Araujo2018}. Then, for a general time-dependent Hamiltonian $H_t$, the mean energy of the system is
\begin{equation}
    U[\rho_t] = \Tr(\rho_t H_t).
\end{equation}
The main idea behind the theory is to identify a symmetry group such that any density operator linked to $\rho_t$ by the action of the group gives the same value for the mean energy. See Fig.~\ref{ThermoGroup2}. In this way, we have an equivalence class of systems such that
\begin{equation}
    U[\rho_t] = \Tr(V_t\rho_tV_t^\dagger H_t) = U[V_t\rho_tV_t^\dagger],
    \label{gaugeinv}
\end{equation}
with $V_t$ representing the symmetry transformation, an element of the thermodynamic group. Since the matrices $V_t$ take one density matrix to another, we can assume that they are unitary. Joining this fact with Eq.~\eqref{gaugeinv}, we see that the set of transformations $V_t$ allowed must commute with the Hamiltonian of the system. This leads to a gauge group that is isomorphic to the following Lie group~\cite{ThermoGauge2}:
\begin{equation}
\mathrm{G}_T = \mathrm{U}(n_t^1) \times \mathrm{U}(n_t^2) \times ... \times \mathrm{U}(n_t^k),
    \label{thermogroup}
\end{equation}
where $\times$ denotes the Cartesian product and $\mathrm{U}(n_t^i)$ is the unitary group of dimension $n_t^i$, the (in general, time-dependent) degeneracy of the $i$-th eigenvalue of the Hamiltonian. Thus, $\sum_i n_t^i = d$ is the dimension of the Hilbert space of the system. The group $\mathrm{G}_T$ is called the thermodynamic group. Note that the structure of the group will change if the set of observables is modified; however, the general construction remains the same. Gauge invariance is imposed here by stating that physical quantities, which are expressed as functions of the density matrix, must be invariant under the action of $\mathrm{G}_T$. 

\begin{figure}[t]
    \centering
    \includegraphics[width=1.0\linewidth]{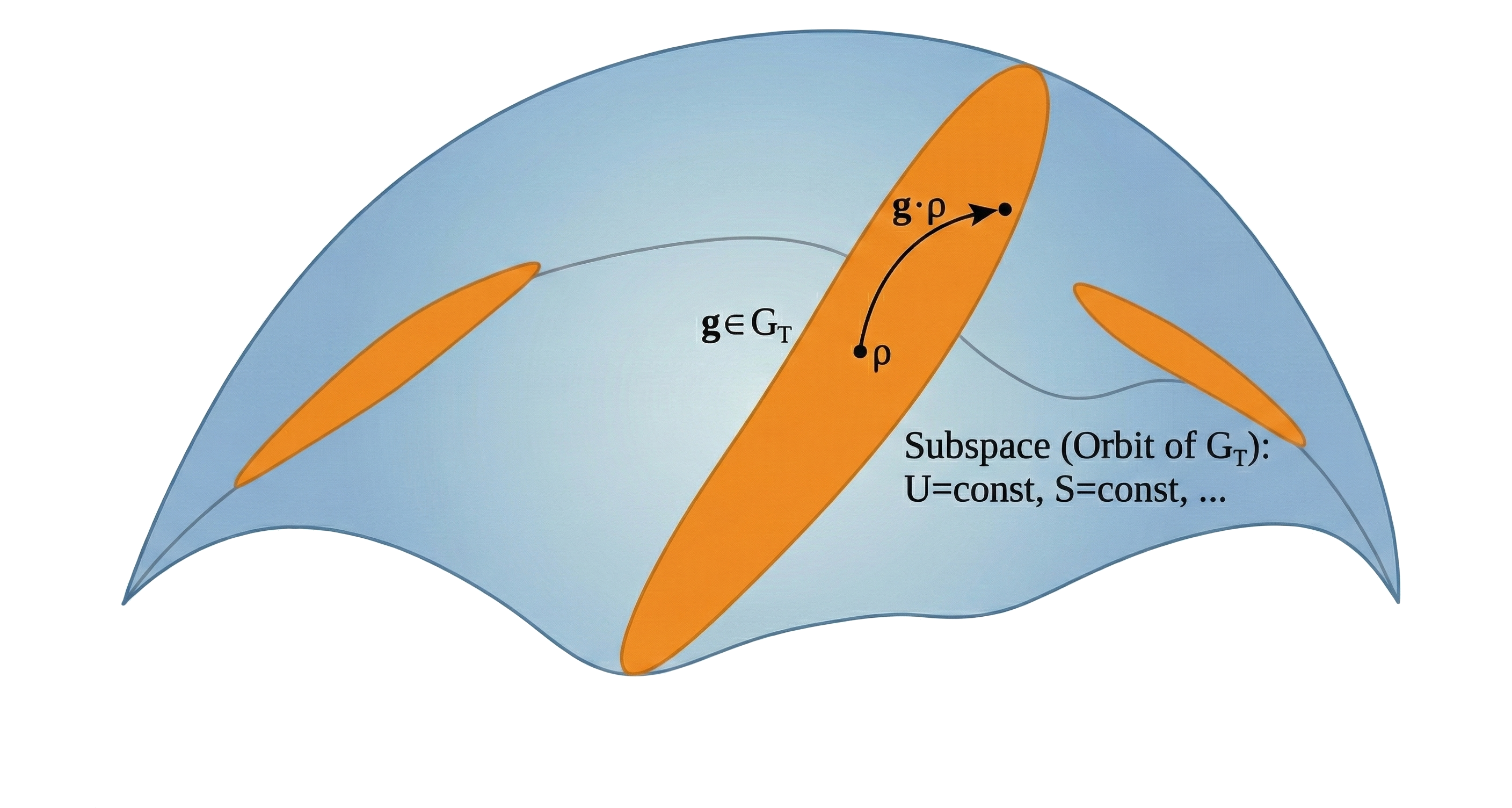}
    \caption{\justifying The action of the thermodynamic group $\mathrm{G}_T$. The orbits (orange blobs), created by the group action $g \cdot \rho$ (which is just conjugation by the matrices $V_t$) are subspaces of density matrices that are indistinguishable for the measurement scheme that defines the group.}
    \label{ThermoGroup2}
\end{figure}

In general, the degeneracy structure that defines $\mathrm{G}_T$ can vary over time, leading to a family of thermodynamic gauge groups $\mathrm{G}_T$. The physical interpretation and consequences of this time dependence are discussed in Sec.~\ref{Geometrization}.

In further analogy with modern gauge theories, Ref.~\cite{ThermoGauge1} introduces the notion of a covariant derivative $\mathrm{D}_t$ defined by a potential $A_t = i \dot u_t u_t^\dagger$, where $u_t$ is the unitary matrix that diagonalizes the chosen observable, the Hamiltonian of the system in this case. It is defined by the action
\begin{equation}
\mathrm{D}_t (\cdot) = \partial_t (\cdot) + i[A_t, (\cdot)],
\label{CovariantDerivative}
\end{equation}
Based on this covariant derivative, one can write the gauge-invariant definitions of work $W_{inv}$ and heat $Q_{inv}$ for quantum systems~\cite{ThermoGauge1} as
\begin{equation}
W_{inv}[\rho_t] = \int_0^\tau \dd t \Tr(\rho_t \mathrm{D}_t H_t),
    \label{Winv}
\end{equation}
and
\begin{equation}
Q_{inv}[\rho_t] = \int_0^\tau \dd t \Tr(H_t \mathrm{D}_t \rho_t),
    \label{Qinv}
\end{equation}
respectively. In these equations, the integration is performed over time, considering a thermodynamic process that occurs between times $0$ and $\tau$. See Refs.~\cite{ThermoGauge1,ThermoGauge2} for the physical meaning of these quantities. It is important to note that we have two distinct actions on the system. First, we have a physical process under which the state evolves, and we have the action of the thermodynamic group, which is simply a symmetry action at each instant of time. Work and heat, as well as other physical quantities, are related to the physical process.

Another important thermodynamic functional that can be defined in a gauge-invariant fashion is the entropy, which takes the form~\cite{ThermoGauge2}
\begin{equation}
    S_{\mathrm G_T}[\rho_t] = S_{vn}[\rho^E_t],
    \label{SG}
\end{equation}
where $S_{vn}(\rho)=-\Tr\rho\ln\rho$ is the von Neumann entropy and 
\begin{equation}
    \rho^E_t = \bigoplus_{k=1}^{p \leq d}\frac{Tr[\Pi_{n_t^k}\rho_t\Pi_{n_t^k}]}{n_t^k} \mathbb I_{n_t^k}.
    \label{rhoddE}
\end{equation}
In Eq.~\eqref{rhoddE}, $\Pi_{n_t^k}$ are projectors onto the subspace associated with the degeneracy $n_t^k$, while $\mathbb I_{n_t^k}$ stands for the identity operator in that subspace. The index $E$ stands for the energy eigenbasis. If we choose a distinct observable, the structure of this operator will remain the same, but the eigenprojectors will be those of the other observable. See Ref.~\cite{ThermoGauge2} for more details.
    
Moreover, this procedure for constructing gauge-invariant quantities is not exclusive to entropy. In particular, since $\rho^E$ is the average density matrix with respect to the thermodynamic group (via quantum twirling), it makes sense to define a general gauge-invariant thermodynamic functional as~\cite{ThermoGauge2}
\begin{equation}
    F_{inv}[\rho_t] = F[\rho^E_t]
\end{equation}
as long as its non-gauge-invariant counterpart $F[\rho]$ is invariant under unitary transformations~\cite{ThermoGauge2}.

Thus, this theory defines physical quantities as those invariant under the action of the gauge group $\mathrm{G}_T$, in complete analogy to the gauge principle in fundamental theories of physics. Here, we take a step forward in the formalization of the theory by providing its constructions in terms of fiber bundles, the natural language in which modern gauge theories are formulated. 

\section{Gauge theories as geometry of principal bundles} \label{GaugeGeometry}

In modern mathematics and theoretical physics, fields are understood as sections of fiber bundles. This viewpoint makes explicit the geometric origin of local degrees of freedom, clarifies the role of gauge choices, and highlights how connections generalize ordinary directional derivatives. Fiber bundles, therefore, provide the natural mathematical framework for formulating such theories. For example, in gauge theories describing the fundamental interactions between elementary particles, the interaction is mediated by a gauge potential. Geometrically, a gauge potential is the local expression of a connection that lives on a principal fiber bundle over the four-dimensional spacetime.

The purpose of this section is twofold: first, to establish the notation used in the next section; and second, to give a concise overview of the main mathematical structures involved. Readers seeking a deeper treatment are referred to Refs.~\cite{Baez1994,konopleva1981gauge}, while Appendix~\ref{MathPrelim} provides additional details.
 
Given two topological spaces $\mathds{E}$ (referred to as the total space) and $\mathds{B}$ (referred to as the base space), a Lie group $\mathrm{G}$ (the fiber), and a projection map $\pi~:~\mathds{E}~\rightarrow~\mathds{B}$, a principal $\mathrm{G}$-bundle $\xi=(\mathds{E},\pi,\mathrm{G},\mathds{B})$ is a geometric structure such that there exists an open covering $\{O_\alpha\}$ and a set of homeomorphisms $\phi_\alpha:\pi^{-1}(O_\alpha)\rightarrow O_\alpha \times \mathrm{G}$ (called local trivializations of $\mathds{E}$) satisfying $\pi\phi_\alpha^{-1}(x, g) = x$, for $x \in O_\alpha$ and $g \in \mathrm{G}$~\cite{nash2013topology}. Figure~\ref{bundle} presents a schematic representation of these concepts.

\begin{figure}
    \includegraphics[width=0.5\textwidth]{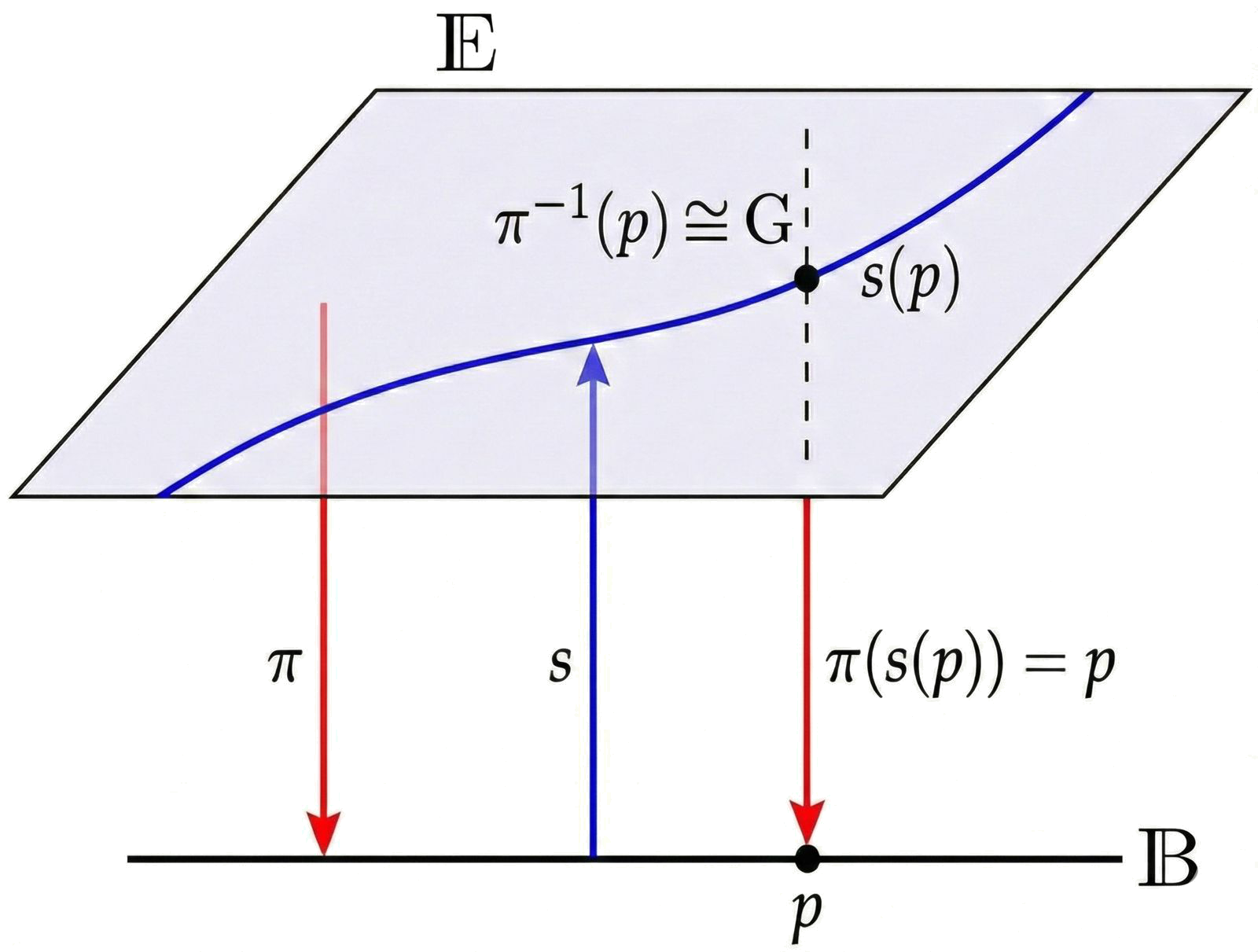}
    \caption{\justifying A simple example of a bundle. The base space $\mathds{B}$ is the real line $\mathds{R}$, while the total space $\mathds{E}$ is the plane $\mathds{R}\times\mathds{R}$. The projection map $\pi$ associates a fiber $E_{p}$ to each point of $\mathds{B}$. The total space is the union of all the fibers $\mathds{E} = \cup_{p\in\mathds{E}}{E}_{p}$. The projection simply tells you which point of the base space a given fiber element belongs to. In physics, the base space is usually spacetime and the fiber encodes internal degrees of freedom attached to each spacetime point. The section $s$ of the bundle is also shown. A section is a choice of exactly one fiber element over each point of the base space. In physics, a section is the rule that assigns to each point in spacetime the value of a physical quantity living in the fiber.}
    \label{bundle}
\end{figure}

The above definition reflects the idea that the total space of a principal bundle is locally, but not necessarily globally, a product space of $\mathds{B}$ and $\mathrm{G}$. In the case where $\mathds{E}$ is globally isomorphic to $\mathds{B} \times \mathrm{G}$, we say it is a trivial principal bundle~\cite{nash2013topology}. We can also define a function $s~:~\mathds{B}~\rightarrow~\mathds{E}$ called a section of the bundle that locally inverts the projection $\pi$. See Fig.~\ref{bundle} for more details of this definition. A principal bundle is trivial if and only if there exists a global section; i.e., the projection can be globally inverted. The sections of the principal fiber bundles in the context of gauge theories are associated with the choices of gauge~\cite{nash2013topology,bleecker2013gauge}.

In the geometric theory of gauge fields, one is generally interested in the following setup~\cite{Baez1994,bleecker2013gauge, konopleva1981gauge,nash2013topology}: First, in the principal $\mathrm{G}$-bundle $\xi$, the gauge group of the theory is $\mathrm{G}$ (e.g. $\mathrm{U}(1)$ for electromagnetism, $\mathrm{SU}(2)$ for the weak interaction, and $\mathrm{SU}(3)$ for the strong interaction). Most often, $\xi$ is the bundle product with $\mathds{B}=\mathcal{M}$, where $\mathcal{M}$ is the Minkowski spacetime of dimensions $(3 + 1)$. This is because $\mathcal{M}$ is effectively $\mathbb{R}^4$ with a certain choice of metric; therefore, it is contractible~\cite{nash2013topology}. Since $\mathcal{M}$ is contractible, according to Theorem~\ref{contract} in Appendix~\ref{MathPrelim}, it is always trivial; therefore, $\xi$ is isomorphic to the product bundle, which is easier to work with in local coordinates.

The components of a connection $\omega$ in $\xi$ are identified with the gauge fields of the theory, while the curvature form $\Omega$ associated with $\omega$ is interpreted as the field strength, and the gauge transformations $\omega \rightarrow g^{-1}\omega g + g^{-1}\dd g$ of the potentials are simply the transformation law of the connection form for $g \in \mathrm{G}$ and $\mathrm{G}$, a matrix Lie group~\cite{bleecker2013gauge}.

Given $\xi$, we also have a vector bundle (a fiber bundle in which the fiber has a vector space structure~\cite{nash2013topology,tu2017differential}) called the associated bundle. The sections of this bundle are the particles of the theory~\cite{bleecker2013gauge}.

The associated bundle of a principal bundle can be constructed in the following manner~\cite{tu2017differential}: given $\xi$ and a linear representation $\rho$ of $\mathrm{G}$ acting in a vector space $\mathds{V}$, the associated bundle $\eta_\rho$ is the quotient of $\mathds{E} \times \mathds{V}$ by the following equivalence relation \cite{tu2017differential}:
\begin{equation}
(p, \nu) \sim (p\cdot g, \rho(g^{-1})\nu),   
\end{equation}
where $g \in \mathrm{G}, \ (p, \nu) \in (\mathds{E}\times\mathds{V})$. If $\xi$ is trivial, then $\eta_\rho$ is also trivial and isomorphic to the product vector bundle with base space $\mathds{B}$ and fiber $\mathds{V}$~\cite{tu2017differential}. Furthermore, given the associated bundle $\eta_\rho$, any connection $\omega$ in $\xi$ induces a covariant derivative $\mathrm{D} = \dd + \omega$ in tensorial forms valued $\mathds{V}$ in $\mathds{E}$. This is the familiar covariant derivative from gauge theory~\cite{Baez1994,konopleva1981gauge, bleecker2013gauge,faddeev1991gauge}.

In order to make the ideas clearer, let us consider the electromagnetic case, which is the simplest gauge theory we have in fundamental physics. Electrodynamics can be constructed geometrically on a principal $\mathds{U}(1)$-bundle over Minkowski spacetime $\mathcal{M}$~\cite{nash2013topology, bleecker2013gauge, konopleva1981gauge}. In this construction, since the Lie algebra $\mathfrak{u}(1)$ is just the imaginary numbers, we can associate the components $A_\mu$ of the vector potential with the coefficients of a 1-form connection in this fiber bundle by $\omega_u = -iA_\mu$~\cite{bleecker2013gauge}. The connection form itself is then written as $\omega = \omega_\mu \dd x^\mu$. 

As mentioned above, for a principal fiber bundle with a matrix Lie group as fiber, the connection transforms as $\omega \rightarrow g^{-1}\omega g + g^{-1} \dd g$. Since $\mathrm{U}(1)$ is abelian and $ \mathrm{U}(1) = \{e^{i\phi}, \theta \in \mathds{R}\}$, this reduces to $\omega \rightarrow \omega + \dd\phi$. Noting that $\phi$ is a 0-form, one immediately sees that this implies the usual gauge transformation of electromagnetism.

From the connection $\omega$, one can compute a curvature 2-form using Eq.~\eqref{CurvatureForm} of Appendix~\ref{MathPrelim}. The Lie bracket term disappears because $\mathrm{U}(1)$ is abelian, so $\Omega = dd\omega$. In terms of coefficients, this becomes $\Omega_{\mu \nu} = \partial_\mu \omega_\nu - \partial_\nu \omega_\mu$. In terms of the gauge potential, which is the connection, this is translated to $F = \dd A$, where $F$ is the electromagnetic field tensor~\cite{Baez1994,nash2013topology}.

In a way, the entirety of classical electrodynamics is contained within the tensor $F$. The four Maxwell equations (in the absence of sources) can be elegantly written as the Yang-Mills equations
\begin{align}
    &\dd F = 0,\nonumber \\
    &\dd \star F = 0,
    \label{Maxwell}
\end{align}
where $\star$ is the Hodge star operator. By analyzing them component-wise, it can be directly seen that Eqs.~\eqref{Maxwell} lead to all four Maxwell's equations~\cite{Baez1994,nash2013topology,konopleva1981gauge}. This kind of expression is advantageous because it is valid for an arbitrary base manifold and does not depend on any specific set of coordinates~\cite{konopleva1981gauge}.

Lastly, it would be of interest to take a look at how charged particle fields arise in this formalism. Following the construction we did earlier for the associated bundle, we must choose a representation of the group $\mathrm{U}(1)$. All linear representations of the unitary 1-dimensional group are of the form $\rho(g) = e^{iqt}$, $q \in \mathbb Z$~\cite{martin2021lie}. Thus, choosing a representation is equivalent to choosing an integer. This integer $q$ is called the charge of the particle fields of the associated bundle. The individual charge fields of matter $q$ are then seen to be sections of this associated bundle~\cite{bleecker2013gauge}.

The main goal of this work is to formalize the gauge theory of quantum thermodynamics~\cite{ThermoGauge1,ThermoGauge2} in the language of fiber bundles, thus writing thermodynamics on the same mathematical basis as the fundamental theories of physics.

\section{Geometric theory of quantum thermodynamics} \label{Geometrization}

We begin by addressing what might, at first, seem to be an inconsistency. The theory developed in~\cite{ThermoGauge1,ThermoGauge2} presents a gauge theory of the thermodynamic group $\mathrm{G}_T$. On the other hand, the potential $i \dot u_t u^\dagger_t$, defined for the covariant derivative, takes values in the Lie algebra of the unitary group $\mathrm{U}(d)$ and not in that of the thermodynamic group. Here, $\mathrm{U}(d)$ denotes the unitary group acting on the full system Hilbert space of dimension $d$, as defined in Sec. II. We will shortly see that this is not an inconsistency and that we are actually dealing with two distinct (although related) geometrical structures, and not just one.

Moreover, it is important to note that since $\mathrm{G}_T$ is not a fundamental gauge group, such as the ones in the standard model, but instead emerges from a lack of information about the system at hand, we should not expect it to behave in the same way as the group $\mathrm{U}(1)$ does in electrodynamics, for example. Indeed, as will become clear later on, this emergent thermodynamic gauge has a rich geometric structure that sets it apart from those commonly studied in field theories.

To construct the relevant fiber bundles for the theory, we note that all quantities of interest, such as the density operator $\rho_t$ and the Hamiltonian $H_t$, depend only on time. Therefore, by analogy with the gauge theories of the standard model, the base space of our bundle should be time, which can be taken to be the real line $\mathds{R}$. Since $\mathds{R}$ is contractible~\cite{nash2013topology}, our bundle is necessarily trivial by Theorem~\ref{contract} in Appendix~\ref{MathPrelim}. As such, we take the liberty to employ the isomorphism between a trivial bundle and a product bundle of the base space and the fiber, and we always regard quantum thermodynamics as taking place on the product bundle.

For the fiber, seeing the definition of the covariant derivative, we take the entire unitary group $\mathrm{U}(d)$ acting on the Hilbert space of dimension $d$. We further note that since $\mathrm{G}_T$ is a subgroup of $\mathrm{U}(d)$, any structure regarding the thermodynamic group can be considered a substructure of this principal $\mathrm{U}(d)$-bundle.

Thus, the first geometrical setup for quantum thermodynamics is a principal $\mathrm{U}(d)$ bundle $\xi=(\mathds{R}\times\mathrm{U}(d), \pi,\mathrm{U}(d),\mathds{R})$, with $\pi$ having the same meaning as before. We note that it makes sense for the gauge group to be $\mathrm{U}(d)$, as we are currently not concerned with any specific measurements of thermodynamic quantities that would require gauge invariance under the action of $\mathrm{G}_T$. Since this bundle arises from the unitary matrices $u_t$ that diagonalize the Hamiltonian $H_t$, we can observe that the gauge freedom associated with the unitary group is that of a choice of basis, which we use to work in the eigenbasis of the Hamiltonian.

The next step in characterizing the geometry of the bundle is verifying that the potential present in the covariant derivative defines a form of connection on $\xi$. This can be seen to be true by noting that if we slightly change the definition of the potential from $i \dot u_tu_t^\dagger$ to $A_t = \dot u_t u_t^\dagger$ (thus changing the covariant derivative to $\mathrm{D}_t \cdot = \partial_t + [A_t, \cdot]$), we see that $A_t$ is precisely a right-invariant Maurer-Cartan form on $\mathrm{U}(d)$. Thus, by Theorem~\ref{Connection} in Appendix~\ref{MathPrelim}, $A_t$ defines a connection on $\xi$. 

Thus, we have constructed a principal $\mathrm{U}(d)$-bundle $\xi$ and defined a connection on it. It is now time to understand how physical quantities such as $\rho_t$ and $H_t$ arise in this construction. To do this, we need to build an associated vector bundle.

We take $\mathds{V}$ to be the vector space of $d \times d$ Hermitian matrices and have the group $\mathrm{U}(d)$ act on it through the adjoint representation such that $u\cdot \nu = u\nu u^\dagger$, where $u \in \mathrm{U}(d)$ and $\nu \in \mathds{V}$. Since our original principal bundle is trivial, we can always take the associated bundle to be the product vector bundle with total space $\mathds{R} \times \mathds{V}$. This means that each point on $\eta$ is a tuple $(t, \nu)$, where $t$ is an instance of time and $\nu$ is some $d \times d$ Hermitian matrix. By analogy with the notion of matter fields as sections of the associated bundle discussed earlier, and noting that a global section of $\eta$ associates each point $t \in \mathds{R}$ with a Hermitian matrix $\nu \in \mathds{V}$, we see that the "matter fields" of the gauge theory of quantum thermodynamics are simply time-dependent Hermitian observables.

Given the above argument, we naturally consider the density matrices not as emergent gauge potentials, as proposed in~\cite{ThermoGauge1, ThermoGauge2}, but as specific matter fields of the theory. Thus, in this framework, the thermodynamic quantities of interest arise as functionals of the matter fields (observables), much like in a quantum field theory where several physical properties arise as functionals of matter fields (for example, Dirac or Klein-Gordon fields)~\cite{konopleva1981gauge,faddeev1991gauge,peskin1995introduction}.

With the construction we have so far, we can also see that, since the adjoint representation of a Lie group induces the adjoint representation of its Lie algebra, namely $A\cdot = [A, \cdot]$~\cite{kobayashi1996foundations}, via differentiation, the covariant derivative~\eqref{CovariantDerivative} implements that precise action, thus keeping itself consistent with the rest of the structure.

We now return to the topic of the gauge symmetry associated with the thermodynamic group $\mathrm{G}_T$. As defined in Eq.~\eqref{thermogroup}, it is more accurately described as a time-dependent family of Lie groups. Thus, for different points in the base space $\mathds{R}$, $\mathrm{G}_T$ is not necessarily the same group. As such, we do not, in general, have a single principal $\mathrm{G}_T$-bundle that can be defined as a sub-bundle of $\xi$ with base space $\mathds{R}$. However, we can note that at any instant $t$, we can form a trivial principal bundle with total space $\{t\} \times \mathrm{G}_T^t$, where by $\mathrm{G}_T^t$ we denote the thermodynamic group at time $t$. This bundle has a connection defined by the Maurer-Cartan form of $\mathrm{G}_T^t$, similarly to how we define the connection for the bundle $\mathrm{U}(d)$. It is this structure that carries along the emergent gauge symmetry related to the redundant information contained in the density matrix. This can be seen by noting that our construction of the associated bundle applies identically in this case.

This second geometric structure defined above formalizes the idea that the emergent gauge of quantum thermodynamics arises only concerning measurements, which we assume are taken at some specific time $t$. This can be understood as the geometric meaning of the coarse-graining introduced in Ref.~\cite{ThermoGauge1}. In a situation where the degeneracies of the Hamiltonian (and consequently the thermodynamic group) do not change continuously in time but vary in specific times $\{t_1, t_2, ..., t_m\}$, we can define, instead of bundles over points, bundles over time intervals $[0, t_1), [t_1, t_2), ... [t_{m-1}, t_m), [t_m, \infty)$ equipped with connections given by the Maurer-Cartan forms and Theorem~\ref{Connection}. This construction may be useful in a geometrical study of quench dynamics, where additional interaction terms in the Hamiltonian are turned on at a discrete set of times.

One important thing to note is that, since the curvature form is a 2-form and can also be written as the unique pullback of a 2-form in the base space~\cite{tu2017differential}, our principal bundles have globally vanishing curvature due to the fact that a 1-dimensional base space such as $\mathds{R}$ cannot have no non-vanishing 2-forms. 

The vanishing of the curvature should not be interpreted as a lack of physical content, but rather as a direct consequence of the operational structure of thermodynamics. In the present framework, the base manifold is identified with physical time, reflecting the fact that thermodynamic processes are defined through temporal protocols ---preparations, drivings, and measurements--- rather than through spatially extended gauge fields. As a result, the associated gauge structure organizes how thermodynamic descriptions are compared along time, rather than encoding dynamical forces in the sense of Yang–Mills theories.

Despite the absence of curvature, the thermodynamic connection plays a nontrivial physical role. It defines a covariant notion of temporal transport for thermodynamic quantities, ensuring consistency under time-dependent changes of description induced by restricted measurements. In particular, when the density matrix is interpreted as a section of an associated bundle, the connection specifies how states at different times are compared in a gauge-invariant manner. This covariant transport is essential for the definition of physically meaningful notions of work, heat, and entropy.

Importantly, the physical content of the theory resides in the global properties of the connection along finite-time processes rather than in local curvature. Even in one dimension, the connection, paired with the coarse-graining, can generate nontrivial holonomies that encode the cumulative effect of gauge transformations along a thermodynamic protocol. In this sense, irreversibility and path dependence arise not from curvature, but from the structure of time-ordered transport in the space of thermodynamic descriptions.

In an effort to make the discussion above easier to digest, we again invoke the electrodynamic example for a final comparison, as shown in Table~\ref{Comparison}.
\begin{table}[H]
    \centering
    \begin{tabular}{|c|c|}\hline
        \textbf{Electromagnetism}& \textbf{Quantum Thermodynamics} \\\hline
 $\mathrm{U}(1)$ symmetry&$\mathrm{U}(d)$ and $\mathrm{G}_T$ symmetries\\\hline
         4-vector potential& 
     Maurer-Cartan connection\\ \hline
 Electromagnetic field strength & Vanishing curvature\\\hline
 Charged particle fields & Hermitian observables\\ \hline\end{tabular}
    \caption{Comparison between the geometrical structure of the gauge theories of electrodynamics and quantum thermodynamics.}
    \label{Comparison}
\end{table}

The thermodynamic gauge group $\mathrm{G}_T$ is, in general, time dependent. This time dependence is not a formal artifact, but reflects a physically meaningful feature of thermodynamic processes: the information that is accessible to an observer may change during the evolution. In the present framework, the gauge group is determined by the degeneracy structure of the observables that are monitored during the process. Whenever this degeneracy structure changes in time, the associated thermodynamic gauge symmetry must also change.

Such situations naturally arise in driven quantum systems, where external control parameters modify the Hamiltonian and, consequently, the spectrum and its degeneracies. Examples include quenches, slow parameter ramps, or protocols crossing level crossings or symmetry-breaking points. In these cases, the equivalence classes of states defined by thermodynamic coarse-graining are themselves time dependent, and the appropriate description is not a single fixed gauge group acting globally in time, but a family of thermodynamic groups $\mathrm{G}_T$ acting locally along the process.

From a geometric perspective, this implies that the thermodynamic bundle is defined piecewise in time, with possibly distinct gauge structures over different temporal intervals. The connection introduced in this work provides a consistent way to relate thermodynamic descriptions across these intervals, ensuring gauge-covariant transport even when the underlying symmetry group changes. In this sense, the time dependence of $\mathrm{G}_T$ encodes changes in the informational constraints imposed by the measurement protocol, rather than changes in the microscopic dynamics itself.

This structure has direct physical implications. Since gauge invariance determines which quantities are thermodynamically meaningful, changes in $\mathrm{G}_T$ modify the set of admissible gauge-invariant quantities during the process. As a result, the classification of energy exchanges into work, heat, and entropy becomes explicitly protocol dependent, reflecting the evolving accessibility of information. The present framework thus provides a geometric language to describe nonequilibrium thermodynamic processes in which the notion of thermodynamic equivalence evolves in time.

\section{Application: The LMG model} 
\label{LMG}

To illustrate how the geometric thermodynamic framework developed in this work applies to a concrete physical system, we consider the Lipkin-Meshkov-Glick (LMG) model~\cite{Lipkin1965,Meshkov1965,Glick1965}. This model provides a particularly transparent example, as its symmetry properties and degeneracy structure are well understood and can be externally controlled through time-dependent parameters. Here we present a description of this model and relate it to the geometric structure constructed here. An extensive numerical analysis of this model was presented in the context of the gauge theory of thermodynamics in~\cite{ThermoGauge2}, where the invariant work, heat, and entropy were calculated for quench protocols.

The LMG model describes a system of $N$ spin-$\tfrac{1}{2}$ particles with infinite-range interactions and is conveniently expressed in terms of collective spin operators $J_\alpha = \tfrac{1}{2}\sum_{i=1}^N \sigma_i^\alpha$, where $\alpha = x,y,z$. A commonly used form of the Hamiltonian reads
\begin{equation}
H_t = -\frac{\lambda_t}{N}\left(J_x^2 + \gamma J_y^2\right) - h_t\, J_z ,
\end{equation}
where $\lambda_t$ is an interaction strength, $\gamma$ is an anisotropy parameter, and $h_t$ is a transverse field. Both $\lambda_t$ and $h_t$ can be controlled externally and vary over time according to a prescribed protocol.

For fixed parameters, the Hamiltonian is invariant under permutations of spins and conserves the total angular momentum. As a consequence, the Hilbert space decomposes into invariant subspaces labeled by the total spin quantum number $J$, and the energy spectrum generally exhibits degeneracies associated with this symmetry. These degeneracies define the equivalence classes of states under thermodynamic coarse graining and, in the present framework, determine the thermodynamic gauge group $\mathrm{G}_T$ acting on the associated bundle of density operators.

When the system is driven by a time-dependent protocol, the degeneracy structure of the Hamiltonian may change. For example, variations in the transverse field $h_t$ or the interaction strength $\lambda_t$ can lift or create degeneracies, particularly near critical points that separate distinct phases of the model. As a result, the thermodynamic gauge group becomes explicitly time dependent, $\mathrm{G}_T \rightarrow \mathrm{G}_T(t)$, reflecting the evolving symmetry and information content of the system.

Within the geometric formulation developed in this work, this situation is naturally described by a family of thermodynamic gauge groups defined along the time axis. The density matrix is interpreted as a section of an associated bundle, and the thermodynamic connection ensures gauge-covariant transport of states under time evolution, even when the underlying gauge group changes. Importantly, although the base manifold is one-dimensional and the curvature vanishes identically, the connection remains nontrivial and encodes the physically relevant parallel transport between thermodynamic descriptions defined at different times.

The physical implications of this structure are directly observable at the level of thermodynamic quantities. Since gauge invariance determines which observables are thermodynamically significant, changes in $\mathrm{G}_T$ modify the set of admissible gauge-invariant quantities during the protocol. In particular, the decomposition of energy variations into work and heat becomes explicitly dependent on the evolving thermodynamic symmetry, reflecting changes in the informational constraints imposed by the measurement scheme. This provides a geometric explanation for the protocol dependence of thermodynamic quantities in driven many-body systems.

The LMG model thus offers a concrete realization of the central ideas of this work. It shows how changes in degeneracy and symmetry naturally induce a time-dependent thermodynamic gauge structure and how the associated bundle formulation provides a consistent and physically transparent description of nonequilibrium thermodynamic processes. Although no explicit dynamics is required for the present discussion, this example demonstrates that the refined geometric framework developed here extends beyond a purely formal construction and captures physically relevant features of realistic many-body systems.

The time dependence of the thermodynamic gauge group $\mathrm{G}_T(t)$ has direct consequences for the definition and interpretation of thermodynamic quantities such as work, heat, and entropy. In the present framework, these quantities are defined through gauge-covariant variations of the density matrix along the time axis. As a result, their physical meaning is intrinsically tied to the connection on the thermodynamic bundle.

For a fixed thermodynamic gauge group, the infinitesimal change of the internal energy,
\begin{equation}
\dd U = \Tr\!\left(\rho\, \dd H\right) + \Tr\!\left(H\, \dd \rho\right),
\end{equation}
admits a gauge-invariant decomposition into work and heat, with the first term identified as work and the second as heat. This separation relies on the assumption that the notion of thermodynamic equivalence, encoded in $\mathrm{G}_T$, remains unchanged during the process.

When the thermodynamic gauge group becomes time-dependent, this assumption no longer holds globally. Changes in the degeneracy structure of the Hamiltonian modify the equivalence classes of states under thermodynamic coarse-graining and therefore alter which variations of the density matrix are physically distinguishable. In geometric terms, the subspaces defined by the connection are themselves time dependent.

As a consequence, the gauge-covariant derivative of the density matrix acquires additional contributions associated with the change of the gauge structure. These contributions reflect the fact that part of the state variation arises not from physical energy exchange with an environment but from a change in the informational constraints defining the thermodynamic description. In this situation, the standard identification of $\Tr(H\,\dd \rho)$ as heat must be refined to include the geometric contribution induced by the evolving connection.

This effect has a clear physical interpretation in the context of the LMG model. When a time-dependent protocol lifts or creates degeneracies in the energy spectrum, previously indistinguishable microstates become distinguishable, or vise versa. This change modifies the entropy associated with coarse-grained descriptions, even in the absence of energy exchange~\cite{ThermoGauge2}. The resulting entropy variation is therefore geometric in origin, arising from parallel transport between thermodynamic descriptions defined by different gauge groups.

From this perspective, entropy production receives a contribution associated with the mismatch between thermodynamic descriptions at different times, encoded in the holonomy generated by the thermodynamic connection and coarse-graining. Although the curvature vanishes identically due to the one-dimensional base manifold, the connection still generates nontrivial parallel transport, which captures the cumulative effect of changing informational constraints along the process.

The LMG model thus illustrates how the refined geometric structure developed in this work provides a unified description of energetic and informational contributions to thermodynamic quantities. Work and heat are no longer determined solely by microscopic dynamics, but also by the evolution of the thermodynamic gauge structure. This demonstrates that the connection introduced here is not merely a mathematical artifact but a physically meaningful object encoding how thermodynamic quantities transform under time-dependent coarse-graining and symmetry changes.

\section{Discussion} 
\label{Discussion}

In this work, we have mathematically formalized the gauge theory of quantum thermodynamics, whose main ideas were put forward in Refs.~\cite{ThermoGauge1,ThermoGauge2}. Specifically, the theory was formulated through the geometry of a principal $\mathrm{U}(d)$-bundle with an additional geometric structure associated with measurements (namely the thermodynamic group $\mathrm{G}_T$).

In our argument, we have separated which mathematical objects belong to each geometric structure associated with the theory. Furthermore, by constructing associated vector bundles, we were able to create an analogy between gauge-theoretic matter fields and time-dependent Hermitian observables in quantum thermodynamics. This result helps to strengthen the link between the theory proposed in Refs.~\cite{ThermoGauge1,ThermoGauge2} and the gauge theories in fundamental physics, such as Yang-Mills theory. In other words, we put thermodynamics in the same mathematical language as fundamental theories in physics.

Furthermore, although the structure associated with the thermodynamic group does not, in general, form a single principal fiber bundle, it may form a collection of them if the degeneracies of the observables under consideration change discretely over time. In particular, formalizing these ideas geometrically allows us to propose a series of questions regarding how the topological properties of the thermodynamic group influence the thermodynamics of a quantum system.

Finally, we note that thermodynamic quantities depend only on energy differences and are therefore invariant under global shifts of the Hamiltonian by a multiple of the identity. Within the present framework, such shifts correspond to transformations that leave all gauge-invariant thermodynamic observables unchanged. This reflects the fact that the thermodynamic gauge structure encodes redundancies associated with informational redundant degrees of freedom in the description, including both coherences within degenerate subspaces and global energy offsets. The proposed gauge formulation thus naturally incorporates the fundamental insensitivity of thermodynamics to the choice of energy zero.

One natural example of a question that arises from this theory is whether the homology and cohomology of the thermodynamic group can lead to any insight into the possible dynamics of a system. Since the thermodynamic group is a product of Lie groups, it should be possible to study this issue by employing Kunneth's theorem~\cite{nash2013topology}.

Another perhaps less obvious direction is the study of relative homology and cohomology~\cite{nash2013topology}. Since the thermodynamic group is a Lie subgroup of the unitary group $\mathrm{U}(d)$, relative homologies and cohomologies between them can be constructed. This may be of interest, as both $\mathrm{G}_T$ and $\mathrm{U}(d)$ are relevant gauge groups of the theory. In particular, since relative Rham cohomology relates to differential forms in $\mathrm{U}(d)$ that vanish in $\mathrm{G}_T$, and gauge-invariant quantities in this framework carry integrals over the thermodynamic group, it makes sense that it may have some meaning in theory.

Additionally, we know that in the theory of classical thermodynamics, there are many important inequalities, such as inequalities that restrict changes in the free energies of a system~\cite{Callen1991}. Meanwhile, in topology, there are known relations between the critical points of functions and the topology of their domain. That is, Morse theory~\cite{nash2013topology} does so by constructing inequalities that relate the numbers of critical points with different indices of the function to the Betti numbers (ranks of homology and cohomology groups) of the space. From this perspective, we believe that Morse theory may offer insight into quantum thermodynamic equilibrium by examining the critical points of certain properly defined free energies. In this context, it may be possible to derive a generalized Clausius inequality, such as the one hypothesized in~\cite{ThermoGauge2}, that is related to the structure of the thermodynamic group through its algebraic or geometric/topological properties~\cite{Pernambuco2026}. 

The last, perhaps less speculative direction to follow relates to the study of the thermodynamic group in the context of Lie group bouquets~\cite{douady1966espaces}. As discussed earlier, due to its time dependence, the thermodynamic group by itself does not form a proper fiber bundle for all times (and thus does not define a gauge theory). The concept of a Lie Group Bouquet, introduced in the seminal 1966 paper by Douady and Lazard~\cite{douady1966espaces}, formalizes the notion of a family of Lie groups parametrized by some other manifold (which, in this case, would be $\mathds R$). It is immediately obvious that the thermodynamic group forms one such bouquet, and it should be worthwhile to study it as such.

\section{Acknowledgments}

The authors are indebted to Sebastian Deffner, Rômulo César Rougemont Pereira, Rafael Chaves Souto Araujo, and Thiago Rodrigues Oliveira for critically reading and commenting on the manuscript. LCC acknowledges CNPq through grant 308065/2022-0, the financial support of the National Institute of Science and Technology for Applied Quantum Computing through CNPq grant 408884/2024-0, and the warm hospitality of the International Institute of Physics, where this work was developed with the support of the Simons Foundation (Grant No. 1023171, R.C.).

\appendix

\section{Mathematical background} \label{MathPrelim}

We briefly review the notion of Lie groups and fiber bundles. We focus on the topics we have employed in the main part of the text. The goal is to make the article more self-contained and accessible to researchers interested in quantum thermodynamics, information theory, and gauge theories. 

The theory of Lie groups is discussed in Refs.~\cite{martin2021lie,tu2017differential}, while nice references on fiber bundles are~\cite{Baez1994,husemoller2013fiber,nash2013topology}. See also \cite{bleecker2013gauge,konopleva1981gauge} for some topics related to principal bundles. We direct the interested reader to these references for a deeper treatment.

\subsection{Bundles and fiber bundles}

We provide a brief description of the main mathematical structure that will be employed to build a geometric theory of quantum thermodynamics. 

A bundle is a mathematical structure that generalizes the notion of a product space. More specifically, a bundle can be understood as a triple $(\mathds E, \pi, \mathds B)$, with $\mathds E$ being a topological space called the total space, $\mathds{B}$ a topological space called the base space, and $\pi:\mathds E\rightarrow\mathds B$ the projection map.

For every point $b \in\mathds B$, we call the space $\pi^{-1}(b)$ the fiber of the bundle over $b$. It is often useful to think of a bundle as a set of spaces (fibers) parametrized by the base space and glued together by the topology of the total space.

Bundles are the basic mathematical objects over which many other, more complex structures, such as vector bundles, fiber bundles, and principal bundles, are built. In this work, we are mainly interested in principal bundles. However, it is often pedagogical to build the notion of a principal bundle from that of a fiber bundle. Thus, we define a fiber bundle by adding two elements to the bundle. A topological space $\mathds F$, called the fiber, and a group $\mathrm G$ of homeomorphisms of the fiber $\mathds F$, called the structure group. Moreover, there is an open cover $\{O_{\alpha}\}$ of $\mathds B$ such that, at each $O_{\alpha}$, $\pi^{-1}(O_{\alpha})$ is homeomorphic to the Cartesian product $O_{\alpha}\times \mathds F$. The homeomorphisms $\phi_\alpha:\pi^{-1}(O_\alpha) \rightarrow O_\alpha\times \mathds F$ are called local trivializations of $\mathds E$. 

Thus, a fiber bundle can be seen as a topological space that is locally (but generally not globally) a product space of $\mathds B$ and $\mathds F$. If a fiber bundle $\xi$ is globally a product space (i.e, if $\mathds E = \mathds B \times \mathds F$), then $\xi = (\mathds B \times \mathds F, \pi, \mathds F, \mathrm G, \mathds B)$ is called a product bundle. A fiber bundle $\zeta = (\mathds E', \pi', \mathds F, \mathrm G, \mathds B)$ is called trivial if it is isomorphic to the product bundle $\xi$.

A section of a fiber bundle is a map $s:\mathds B \rightarrow \mathds E$ that inverts the projection onto the base space. More specifically, we have $p\circ s = \mathds{1}_{\mathds{B}}$, where $\mathds{1}_{\mathds{B}}$ is the identity map on $\mathds B$. The condition that a bundle $\xi$ is trivial is equivalent to the existence of a global section on $\xi$.

We now state an important theorem:
\begin{theorem}
Every fiber bundle on a contractible base space $\mathds{B}$ is trivial.
\label{contract}
\end{theorem}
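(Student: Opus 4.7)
The plan is to reduce the statement to the homotopy invariance of pullback bundles, a classical result from the theory of fibre bundles. Recall that, given a fibre bundle $\xi = (\mathds{E},\pi,\mathds{F},\mathrm{G},\mathds{B})$ and a continuous map $f: \mathds{B}' \to \mathds{B}$, one constructs the pullback bundle $f^*\xi$ with total space $\{(b',e) \in \mathds{B}'\times \mathds{E} : f(b')=\pi(e)\}$, which is itself a fibre bundle over $\mathds{B}'$ with fibre $\mathds{F}$ and structure group $\mathrm{G}$. The homotopy invariance theorem asserts that if $f_0, f_1 : \mathds{B}' \to \mathds{B}$ are homotopic (with $\mathds{B}'$ sufficiently nice, e.g.\ paracompact Hausdorff), then $f_0^*\xi \cong f_1^*\xi$ as fibre bundles.

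Assuming this invariance, the argument is short. Since $\mathds{B}$ is contractible, there is a continuous homotopy $H: \mathds{B}\times[0,1] \to \mathds{B}$ with $H(\,\cdot\,,0) = \mathrm{id}_{\mathds{B}}$ and $H(\,\cdot\,,1) = c_{b_0}$, the constant map at some $b_0 \in \mathds{B}$. Applied to these two maps, the invariance theorem gives
\begin{equation}
\xi \;\cong\; \mathrm{id}_{\mathds{B}}^*\,\xi \;\cong\; c_{b_0}^*\,\xi.
\end{equation}
But $c_{b_0}^*\,\xi$ has total space $\{(b,e) \in \mathds{B}\times\mathds{E} : \pi(e)=b_0\} = \mathds{B} \times \pi^{-1}(b_0) \cong \mathds{B}\times \mathds{F}$, with projection onto the first factor. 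This is, by definition, the product bundle, so $\xi$ is trivial.

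The main obstacle is establishing the homotopy invariance itself, which is the genuine content of the theorem. I would prove it by showing that any fibre bundle $\eta$ over $\mathds{B}'\times[0,1]$ is isomorphic to $p^*(\eta|_{\mathds{B}'\times\{0\}})$, where $p:\mathds{B}'\times[0,1]\to \mathds{B}'$ is the projection; applied to $\eta = (f_0\sqcup f_1)$-type pullbacks via a homotopy, this yields $f_0^*\xi \cong f_1^*\xi$. To prove that intermediate claim, the strategy is a compactness and gluing argument: for each $b'\in \mathds{B}'$, compactness of $[0,1]$ together with local triviality of $\eta$ yields a finite partition $0=t_0<t_1<\dots<t_N=1$ and a neighbourhood $U$ of $b'$ such that $\eta$ is trivial over each slab $U\times[t_{i-1},t_i]$; matching the trivializations on the overlapping slices $U\times\{t_i\}$ produces a local bundle isomorphism from $\eta|_{U\times[0,1]}$ to $U\times[0,1]\times \mathds{F}$, and then a paracompactness-based partition-of-unity argument glues these local isomorphisms into the required global one. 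Packaging this machinery cleanly---and invoking the hypothesis that our bundles have enough regularity for the gluing step---is really the only technical subtlety; the rest of the proof is purely formal.
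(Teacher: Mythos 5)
Your proposal is correct, but note that the paper itself does not prove Theorem~\ref{contract}: it explicitly defers the proof, citing Nash--Sen for an intuitive outline and Steenrod for a rigorous homotopy-theoretic proof. What you have written is essentially that standard argument: reduce triviality to homotopy invariance of pullbacks ($f_0\simeq f_1 \Rightarrow f_0^*\xi\cong f_1^*\xi$), apply it to $\mathrm{id}_{\mathds{B}}\simeq c_{b_0}$, and observe that $c_{b_0}^*\xi \cong \mathds{B}\times\mathds{F}$ is the product bundle. So your route coincides with the one the paper points to rather than replacing it with something different; its value here is that it makes explicit the actual content (the homotopy invariance lemma, equivalently the statement that a bundle over $\mathds{B}'\times[0,1]$ is pulled back from the slice $\mathds{B}'\times\{0\}$) that the paper leaves to the references.

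Two remarks on the technical core. First, you correctly flag that homotopy invariance needs a hypothesis such as paracompactness (or numerability of the covering, in Dold's sense); the theorem as stated in the paper omits this, and your proof quietly supplies it. That is harmless for the paper's applications, where the base is $\mathds{R}$ or subintervals of it, but it is worth stating as a hypothesis rather than burying it in ``sufficiently nice.'' Second, in the gluing step, be careful with the phrase ``partition-of-unity-based argument glues these local isomorphisms'': for a general fibre bundle with structure group $\mathrm{G}$ you cannot average bundle isomorphisms as you would for vector bundles. In the standard proofs (Steenrod, Husem\"oller, Hatcher) the partition of unity is instead used to build functions $\mathds{B}'\to[0,1]$ that \emph{schedule} a (possibly infinite, but locally finite) composition of the slab-wise trivializations, pushing the $[0,1]$ coordinate to $0$ in stages. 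This is exactly the machinery you allude to, so there is no gap in the plan, but the gluing is by staged composition, not by convex combination.
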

By contractible, we mean that the identity map in $\mathds{B}$ is homotopic to some constant map. In other words, the topological space can be continuously deformed into a single point.

We do not prove this theorem here, since it would require many concepts that are outside the scope of this work. For those interested in the proof, Ref.~\cite{nash2013topology} provides an intuitive outline of why Theorem~\ref{contract} is true, while Ref.~\cite{steenrod1999topology} provides a rigorous proof using homotopy theory.

A crucial example of a (generally non-trivial) fiber bundle associated with a manifold $\mathds{M}$ is the tangent bundle $\mathrm{T}\mathds{M}$, defined as
\[
\mathrm{T}\mathds{M} = \bigcup\limits_{p \in \mathds{M}}\mathrm{T}_p\mathds{M},
\] 
where $\mathrm{T}_p\mathds{M}$ is the tangent space of $\mathds{M}$ at the point $p\in\mathds{M}$. A manifold $\mathrm{M}$ is called parallelizable if its tangent bundle is trivial~\cite{martin2021lie}.

\subsection{Lie Theory}

A Lie group $\mathrm{G}$ is a group that is also a differentiable manifold in such a way that the map 
\begin{equation}
    p:(g, h) \in \mathrm{G} \times \mathrm{G} \rightarrow gh \in \mathrm{G}
\end{equation}
is smooth. In particular, as a consequence of the condition above, the operation of taking an inverse in a Lie group is also differentiable (more precisely, it is a diffeomorphism). Since both the operations of taking a product and an inverse are differentiable, they are continuous; thus, all Lie groups are topological groups.

The Lie algebra $\mathfrak{g}$ associated with a Lie group $\mathrm{G}$ is the space of vector fields invariant under the action of $\mathrm{G}$, with a bracket given by the Lie bracket of vector fields. However, for our purposes, it is more convenient to exploit the fact that the Lie algebra of a Lie group is isomorphic to $\mathrm{T}_e\mathrm{G}$, the tangent space of the group at the identity element $e$, and to define the Lie algebra directly as such. Although the Lie algebra is isomorphic to the tangent space at the identity, there is a more general relation between Lie algebras and the tangent spaces of their associated Lie groups, namely:
\begin{equation}
    \mathrm{T}_g\mathrm{G} \cong \mathrm{G} \times \mathfrak{g}.
\end{equation}
Since $\mathfrak{g} \cong \mathrm{T}_e\mathrm{G}$, this also means that every Lie group is parallelizable.

The triviality of the tangent bundle of a Lie group $\mathrm{G}$ allows us to define the so-called left- and right-invariant Maurer-Cartan forms:
\begin{equation}
    \theta^L_g(v) = \dd(L_{g^{-1}})_g(v)
\end{equation}
and
\begin{equation}
    \theta^R_g(v) = \dd(R_{g^{-1}})_g(v),
\end{equation}
respectively. $\dd$ represents differentiation, $L_g$ and $R_g$ are, respectively, left- and right-translated by $g \in \mathrm{G}$ and $v \in \mathrm{T}_g \mathrm{G}$. For a matrix Lie group, such as the ones we will be interested in for this work, the Maurer-Cartan forms can be expressed extrinsically as follows:
\begin{equation}
    \theta_g^L = g^{-1}\cdot \dd g;
\end{equation}
and
\begin{equation}
    \theta_g^R = \dd g\cdot g^{-1}.
    \label{Maurer-Cartan}
\end{equation}
Maurer-Cartan forms play an important role in the gauge theory of the thermodynamic group presented in the main part of the paper.

Given a Lie group $\mathrm{G}$ and a manifold $\mathrm{M}$, a smooth right action of $\mathrm{G}$ on $\mathrm{M}$ is a smooth map $\mu\, : \, \mathds{M}\times \mathrm{G}\rightarrow \mathds{M}$, which we will denote $x\cdot g = \mu(x, g)$, such that:
\begin{enumerate}
    \item $x \cdot e = x$;
    \item $(x \cdot g) \cdot h = x \cdot (gh)$
\end{enumerate}
$\forall x \in \mathds{M}$, $g, h \in \mathrm{G}$. $e$ is the identity element in $\mathrm{G}$. We say that an action is free if $x\cdot g = x \implies g = e$.

Given two manifolds $\mathds{M}$ and $\mathds{N}$ where $\mathrm{G}$ acts on the right, a map $f\,:\,\mathds{M} \rightarrow \mathds{N}$ is said to be $\mathrm{G}$-equivariant if $f(x\cdot g) = f(x) \cdot g$, where $(x, g) \in \mathds{M} \times \mathrm{G}$.

\subsection{Principal Bundles}

Principal bundles are among the most important bundles that arise in physics, as they provide the mathematical structure on which one can formalize gauge theories as geometrical constructs.

A principal $\mathrm{G}$-bundle is a fiber bundle for which the structure group (a Lie group $\mathrm{G}$) and the fiber are one and the same. More specifically, given a Lie group $\mathrm{G}$, we define a principal $\mathrm{G}$-bundle $(\mathds{E}, \pi, \mathrm{G}, \mathds{B})$ as a smooth fiber bundle $(\mathds{E}, \pi, \mathds{F}, \mathrm{G}, \mathds{B})$ such that \cite{tu2017differential}:
\begin{enumerate}
    \item $\mathds{F} = \mathrm{G}$;
    \item $\mathrm{G}$ acts freely on the right in $\mathds{E}$;
    \item Local trivializations $\phi_\alpha:\pi^{-1}(O_\alpha) \rightarrow O_\alpha \times \mathrm{G}$ are $\mathrm{G}$-equivariant, where $\mathrm{G}$ acts on $O_\alpha \times \mathrm{G}$ by $(b, h)\cdot g = (b, hg)$.
\end{enumerate}

The most important objects in the study of the geometry of a principal $\mathrm{G}$-bundle are connections. More specifically, given a principal $\mathrm{G}$-bundle $\xi = (\mathds{E}, \pi, \mathrm{G}, \mathds{B})$, we say that a $\mathfrak g$-valued $1$-form $\omega$ in $\mathds{E}$ is a connection if and only if 
\begin{enumerate}
    \item $\omega$ is $\mathrm{G}$-equivariant;
    \item $\omega(A_p^*) = A$, where 
    \[
    A_p^* = \frac{\dd}{\dd t}(p\cdot e^{tA})|_{t = 0},
    \]
    for $A \in \mathfrak g$ and $p \in \mathds{E}$.
\end{enumerate}

The $A_p^*$ in the above definition are vector fields on $\mathds{E}$ called fundamental vector fields. Connections are very important in differential geometry because they allow us to define the notion of parallel transport in general spaces.

Given a connection $\omega$ on a principal $\mathrm{G}$-bundle, one can always construct a curvature form associated with it. That is, from $\omega$, one can make a $\mathfrak g$-valued 2-form that acts on the vector fields $X$ and $Y$ over the total space $\mathds{E}$ in the following manner:
\begin{equation}
    \Omega(X, Y) = \dd\omega(X, Y) + [\omega(X), \omega(Y)],
    \label{CurvatureForm}
\end{equation}
where $[\cdot, \cdot]$ is the Lie bracket of $\mathfrak g$.

The notion of a connection on a principal $\mathrm{G}$-bundle makes evident the significance of the Maurer-Cartan forms for the geometry of a Lie group. That is, if we consider a Lie group $\mathrm{G}$ as the principal $\mathrm{G}$-bundle product with total space $\{x\}\times \mathrm{G}$, where $\{x\}$ is a single point, then $\theta_g^R$ is a connection on the Lie group regarded as a bundle.

We now turn our attention to a result on connections on trivial principal $\mathrm{G}$-bundles that can be found in~\cite{tu2017differential,tu2020introductory} and is useful in our construction.
\begin{theorem}
    Let $\xi = (\mathds{B} \times \mathrm{G}, \pi_1, \mathrm{G}, \mathds{B})$ be a trivial principal $\mathrm{G}$-bundle, and $\theta_g^R$ be the right-invariant Maurer-Cartan form on $\mathrm{G}$. Define projection $\pi_2:\mathds{B}\times\mathrm{G} \rightarrow\mathrm{G}$, $\pi_2(x, g) = g$. Then the pullback $ \omega_g = \pi_2^*(\theta_g^R)$ is a connection on $\xi$.
\label{Connection}
\end{theorem}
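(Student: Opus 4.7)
The plan is to verify directly the two defining properties of a connection on a principal $\mathrm{G}$-bundle, as given in this appendix: (i) the fundamental vector field condition $\omega(A_p^*) = A$ for every $A \in \mathfrak{g}$, and (ii) $\mathrm{G}$-equivariance. Since $\omega$ is defined as a pullback $\pi_2^*\theta^R$, each verification reduces to understanding how $\pi_2$ interacts with the right action of $\mathrm{G}$ on $\mathds{B} \times \mathrm{G}$ and how $\theta^R$ behaves under right translation.

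For condition (i), I would first compute $A_{(x,g)}^* = \frac{d}{dt}\big|_{t=0}(x, g e^{tA})$ at a point $(x,g) \in \mathds{B} \times \mathrm{G}$, using the right action $(x,g)\cdot h = (x, gh)$ that was fixed in the definition of a principal bundle. Pushing this vector forward by $d\pi_2$ kills the $\mathds{B}$ component and leaves the tangent vector to the curve $t\mapsto g e^{tA}$ in $\mathrm{T}_g \mathrm{G}$. Applying $\theta^R_g = d(R_{g^{-1}})_g$ to this vector — or equivalently using the matrix formula $\theta^R = \dd g \cdot g^{-1}$ — yields, by a short application of the chain rule, the constant element of $\mathfrak{g}$ needed for condition (i).

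For the equivariance condition (ii), the argument rests on two ingredients. First, $\pi_2$ is itself $\mathrm{G}$-equivariant with respect to right translation on $\mathrm{G}$: the equality $\pi_2((x,g)\cdot h) = gh = R_h(\pi_2(x,g))$ is immediate from the definition of the action. Second, the right-invariant Maurer-Cartan form is characterised by its transformation law under right translation on $\mathrm{G}$. Combining these with the naturality of the pullback, $R_h^* \circ \pi_2^* = \pi_2^* \circ R_h^*$, transfers the known transformation law of $\theta^R$ directly to the corresponding transformation law of $\omega$ on the total space.

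The only real obstacle is bookkeeping: making sure the side conventions (the right-versus-left invariance of the Maurer-Cartan form, the sign of the fundamental vector field, and whether the equivariance condition absorbs an adjoint action) line up with the ones stated earlier in this appendix. Once a consistent convention is fixed, both properties follow from short computations using only the definitions of $\pi_2$, $\theta^R$, and the fundamental vector field, with no geometric input beyond the triviality of $\xi$.
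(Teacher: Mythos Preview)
Your proposal is correct and follows essentially the same route as the paper: both proofs verify the two connection axioms directly by pushing everything through $\pi_2$ and reducing to the corresponding properties of $\theta^R$ on $\mathrm{G}$ (viewed as a principal $\mathrm{G}$-bundle over a point). The paper treats equivariance first and the fundamental-vector-field condition second, and phrases the equivariance step as ``$\theta^R$ is already a connection on $\mathrm{G}$, hence equivariant,'' whereas you phrase it via the $\mathrm{G}$-equivariance of $\pi_2$ and naturality of pullback; these are the same computation, and your explicit caveat about aligning the right-versus-left and adjoint conventions is well placed.
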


\begin{proof}
First, we prove that $\omega_g$ is $\mathrm{G}$-equivariant. Let $(X_x, X_g)$ be a vector in the tangent space $\mathrm{T}_{(x, g)}\mathds{B}\times \mathrm{G} \cong \mathrm{T}_x\mathds{B} \oplus \mathrm{T}_g\mathrm{G}$ and $h \in \mathrm{G}$. Then $\omega_g(X_x, X_g) \cdot h = \pi_2^*\theta_g^R(X_x, X_g) \cdot h = \theta_g^R(\pi_2(X_x, X_g)) \cdot h = \theta_g^R(X_g)\cdot h$. But since $\theta_g^R$ is a connection on $\mathrm{G}$ regarded as a bundle over a single point, $\theta_g^R(X_g)\cdot h = \theta_g^R(X_g \cdot h) = \omega_g((X_x, X_g) \cdot h)$. Thus, $\omega_g$ is $\mathrm{G}$-equivariant.

Now, let $A_p^* = \frac{\dd}{\dd t}((x, g)\cdot e^{tA})|_{t=0}$ be a fundamental vector field on $\mathds{B} \times \mathrm{G}$. Then
\begin{eqnarray*}
\omega_g(A_p^*) &=& \pi_2^*\theta_g^R(\frac{\dd}{\dd t}((x, g)\cdot e^{tA})|_{t=0}) \\
&=& \theta_g^R(\pi_2(\frac{\dd}{\dd t}((x, g)\cdot e^{tA})|_{t=0}) \\
&=& \theta_g^R(\frac{\dd}{\dd t}(g\cdot e^{tA})|_{t=0}).
\end{eqnarray*}

But this is exactly the Maurer-Cartan form applied to a fundamental vector field of $\mathrm{G}$. Thus
\[
\omega_g(A_p^*) = \theta_g^R(\frac{d}{d t}(g\cdot e^{tA})|_{t=0}) = A.
\]

Since $\omega_g$ satisfies both axioms of a connection on a principal $\mathrm{G}$-bundle, it is a connection on $\xi$. 
\end{proof}

\end{document}